\documentclass[runningheads]{fmcad}
\usepackage[T1]{fontenc}
\newif\iffinal
\finaltrue
\usepackage{xspace}
\usepackage{listings}
\usepackage{lstautogobble}  % Fix relative indenting
\usepackage{zi4}            % Nice font
\usepackage{xcolor}
\definecolor{bluekeywords}{rgb}{0.13, 0.13, 1}
\definecolor{greencomments}{rgb}{0, 0.5, 0}
\definecolor{redstrings}{rgb}{0.9, 0, 0}
\definecolor{graynumbers}{rgb}{0.5, 0.5, 0.5}

\newcommand{\prettylstpy}[0]{
   \lstset{
    basicstyle={\scriptsize\ttfamily}, 
    keywordstyle={\color{blue}\bfseries},
    language=python,
    breaklines=true,breakatwhitespace=true,
   morekeywords={assert},
}}

\newcommand{\prettylstbl}[0]{
   \lstset{
    basicstyle={\scriptsize\ttfamily}, 
    keywordstyle={\color{blue}\bfseries},
    language=c++,
    breaklines=true,breakatwhitespace=true,
    keywords={let, input, if, then, else},
}}

\newcommand{\prettylstprompt}[0]{
   \lstset{
    basicstyle={\scriptsize\ttfamily}, 
    breaklines=true,breakatwhitespace=true,
}}

\newcommand{\nl}[0]{\texttt{$\mathit{NL}$}\xspace}

\newcommand{\dsl}[0]{\texttt{$\mathit{DSL}$}\xspace}

\newcommand{\sirna}[0]{\textsc{Sirna}\xspace}

\newcommand{\schema}[0]{\text{$V_{SMT}$}}
\newcommand{\schemad}[0]{\text{$V_{Desc}$}}

\newcommand{\nlang}[1]{#1_{A}}
\newcommand{\dslang}[1]{#1_{B}}
\newcommand{\vx}{\mathbf{x}}

\newcommand{\sem}[1]{\llbracket #1 \rrbracket}
\newcommand{\conf}[0]{\texttt{Conf}(T)}
\newcommand{\defs}[0]{$\mathit{defs}$}
\newcommand{\exprs}[0]{$\mathit{exprs}$}

\iffinal
        \newcommand{\jt}[1]{\textcolor{blue}{}}
        \newcommand{\sam}[1]{\textcolor{violet}{}}
        \newcommand{\mh}[1]{\textcolor{olive}{}}
        \newcommand{\nf}[1]{\textcolor{red}{}}	
        \newcommand{\ag}[1]{\textcolor{orange}{}}
        
	\else
        \newcommand{\jt}[1]{\textcolor{blue}{[\textbf{JT:} #1]}}
        \newcommand{\sam}[1]{\textcolor{violet}{[\textbf{SB:} #1]}}
        \newcommand{\mh}[1]{\textcolor{olive}{[\textbf{MH:} #1]}}
        \newcommand{\nf}[1]{\textcolor{red}{[\textbf{NF:} #1]}}
        \newcommand{\ag}[1]{\textcolor{orange}{[\textbf{AG:} #1]}}
\fi

\usepackage{cite}
\usepackage{amsmath,amssymb,amsfonts}
\usepackage{graphicx}
\graphicspath{ {images/} }
\usepackage{verbatim}
\usepackage{textcomp}
\usepackage{booktabs}
\usepackage{multirow}
\usepackage{syntax}
\usepackage{framed}
\usepackage{url}
\usepackage{semantic}
\usepackage{subcaption}
\usepackage[llbracket]{stmaryrd}
\usepackage{mathtools}
\usepackage{mathpartir}
\usepackage{tikz} 
\usetikzlibrary{calc, positioning}

\usepackage{algorithm}
\usepackage{algpseudocode}
\usepackage[capitalise]{cleveref}

\usepackage{amsthm}
\newtheorem{theorem}{Theorem}

\begin{document}
\title{Verifiable Checks for Business Rule Consistency}
%
%\titlerunning{Abbreviated paper title}
% If the paper title is too long for the running head, you can set
% an abbreviated paper title here
%

\author{\IEEEauthorblockN{Joseph Tafese~\orcid{0000-0002-4062-0592}}
\IEEEauthorblockA{\textit{University of Waterloo}\\
Waterloo, Canada \\
jetafese@uwaterloo.ca}
\and
\IEEEauthorblockN{Milad Hooshyar~\orcid{0000-0002-3667-0482}}
\IEEEauthorblockA{\textit{Amazon}\\
Seattle, USA \\
miladho@
}
\and
\IEEEauthorblockN{Sam Bayless}
\IEEEauthorblockA{\textit{Amazon}\\
Seattle, USA \\
sabayless@
}
\and
\IEEEauthorblockN{Nick Feng}
\IEEEauthorblockA{\textit{Amazon}\\
Seattle, USA \\
nicfeng@
}
\and
\IEEEauthorblockN{Arie Gurfinkel~\orcid{0000-0002-5964-6792}}
\IEEEauthorblockA{\textit{University of Waterloo}\\
Waterloo, Canada \\
agurfink@uwaterloo.ca }
}

\maketitle              % typeset the header of the contribution
\begin{abstract}
Maintaining consistency between natural language documentation of business rules and their evolving internal implementations is a significant challenge in large-scale systems.
We present \sirna, a tool and framework for checking such consistency using SMT solvers.
Using the case study of cost calculations in tax domains, we demonstrate a three-part system that combines large language models (LLMs) with formal verification methods.
\sirna translates natural language documentation into candidate SMT formulas using LLMs, followed by checks to validate the translations.
Then, corresponding business rules are converted into equivalent SMT representations and validated against the natural language formalizations.
Our method is generalizable to domains where business logic exists in both natural language documentation and programmatic implementation.
Compared to baseline evaluations, \sirna significantly reduces the number of false positives and false negatives while offering explainability for its findings.
\end{abstract}

\section{Introduction}
\label{sec:introduction}

While automated reasoning has made significant progress in verifying implementation correctness against formal specifications, the gap between natural language (NL) and formal specifications remains a critical bottleneck.
In compiler design, formal semantics must match language documentation.
In API development, method behaviors must align with their documentation.
In business systems, customer-facing documentation must accurately reflect system behavior.
This alignment is particularly critical in billing and tax calculations, where documentation often serves as a legally binding contract with customers.
Beyond customer-facing concerns, NL specifications are also the primary medium through which domain experts and policy designers articulate intended system behavior.
Ensuring that this intent is faithfully realized by the implementation therefore requires a dedicated form of verification between NL documentation and executable artifacts.
This is especially useful in business critical domains where the NL serves multiple stakeholders and the implementation handles consequential computations.

We introduce \sirna,\footnote{An Oromo word that translates to right, truthful or just.}  a tool for verifying business rules that describe real-valued functions such as costs, rates and taxes. 
It is designed for users that maintain both NL documentation and implementations of the same rules in a domain specific language (DSL), a common business scenario. 
\sirna integrates large language models (LLMs) with SMT-based verification to bridge the gap between these dual representations.
NL documentation is automatically translated into SMT-LIB~\cite{SMTLIB} formulas using Bedrock Guardrails Automated Reasoning checks~\cite{arc}.
Implementation code undergoes a sound translation into logically equivalent SMT-LIB.
This encoding enables the use of standard SMT solvers to check whether the documented rules and implemented rules are semantically aligned.
Crucially, our system does not only report consistency or inconsistency: it can be used to systematically enumerate the conditions under which the implementation diverges from the documentation.
By exposing them, \sirna helps users maintain business logic across representations, reconcile customer facing and internal representations, ensure compliance, and maintain trust.

The rest of the chapter is structured as follows: we provide an overview of \sirna in~\cref{sec:bl-overview}, describe our methodology in~\cref{sec:bl-pipeline}, evaluation results in~\cref{sec:bl-evaluation} and conclude in~\cref{sec:bl-conclusion}.

\section{Overview}
\label{sec:bl-overview}

\newsavebox{\figcradsl}
\begin{lrbox}{\figcradsl}
\prettylstbl
\begin{lstlisting}[breaklines=true,numbers=left]
let income = <input>
let sin = <input>

let p14_5 = 0.145 * (if income > 57375 then 57375 else income) (*@\label{line:cra-rules-start}@*)
let p20_5 = if income > 57375
    then 0.205 * ((if income > 114750 then 114750 else income) - 57375)
    else 0
let p26 = if income > 114750
    then 0.26 * ((if income > 177882 then 177882 else income) - 114750)
    else 0
let p29 = if income > 177882
    then 0.29 * ((if income > 253414 then 253414 else income) - 177882)
    else 0
let p33 = if income > 253414 then 0.33 * (income - 253414)
    else 0 (*@\label{line:cra-rules-stop}@*)

let special = if sin < 900000000 then 1 else 0  (*@\label{line:cra-sin}@*)
let final_tax = special * (p14_5 + p20_5 + p26 + p29 + p33)
\end{lstlisting}
\end{lrbox}%
\begin{figure*}[t]
\begin{subfigure}[b]{0.5\textwidth}
    \includegraphics[width=.9\textwidth]{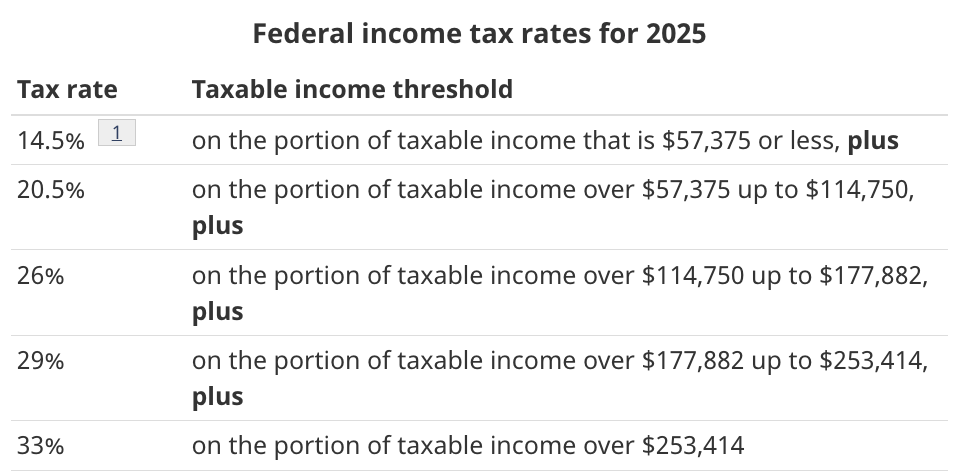}
    \caption{Natural Language.}
    \label{fig:bl-cra-example-nl}
\end{subfigure}%
\begin{subfigure}[b]{0.52\textwidth}
	\scalebox{.85}{\usebox{\figcradsl}}
	\caption{Reference Implementation in a custom DSL.}
	\label{fig:bl-cra-example-dsl}
\end{subfigure}
\caption{Running example: Canadian federal income tax calculation.}
\label{fig:bl-cra-example}
\end{figure*}

\begin{figure}[t]
\begin{lstlisting}[basicstyle=\ttfamily\footnotesize]
[Inconsistency Found]
Inputs: NL {Income = -200/29}, DSL{ income = 0, sin = 900000000 }
Outputs: NL { FinalTaxes = -1 } DSL { final_tax = 0 }
\end{lstlisting}
\caption{Customized Message for Inconsistency}
\label{fig:bl-message}
\end{figure}

We present \sirna, a tool to check the consistency between  natural language documentation (denoted as \nl) and implementations (denoted as \dsl) of business rules.
\sirna takes (\nl, \dsl) as its primary inputs.
Beside the primary inputs, it also requires additional configurable parameters:
(1) variable specifications consisting of a list of  variable names, types, and natural language descriptions (see~\cref{subsec:bl-nltosmt}), 
(2) variable bounds, a set of optional constraints/bounds on the variables (see~\cref{fig:bl-cra-bounds}), and
(3) $\theta$, a confidence threshold (see~\cref{subsec:bl-nltosmt}).
The variable specifications and variable bounds are specific to a given domain. 
$\theta$ is defaulted at $\frac{2}{3}$ but can be adjusted by the users.

\sirna is designed to run in two phases for a given domain: a configuration phase and an evaluation phase.
In the configuration phase, users configure the parameters \sirna will use to evaluate \nl and \dsl for equivalence in their domain. 
In the evaluation phase, users perform consistency checking, with \sirna, against batches of \nl and \dsl and identify or remediate semantic mismatches between them.

In this section, we explain the configuration phase.
The evaluation phase will be shown in the Section \ref{sec:bl-pipeline}.

When \sirna compares the input pair (\nl, \dsl) for equivalence, it can result in four common scenarios: 
(a) an inconsistency caused by underspecification in the equivalence check (e.g., missing legal ranges for values, or missing equivalences between variables),
(b) an inconsistency is found that needs to be rectified, 
(c) the inconsistency is intentional and should be ignored, or
(d) no inconsistencies are found.
Perhaps surprisingly, intentional inconsistencies (Case \emph{c}) can occur when documentation omits complex implementation details to remain accessible to the reader.  
\sirna provides concrete feedback to assist users to classify and resolve these scenarios.

To illustrate these scenarios, consider a simplified view of the Canadian tax code.
We introduce the description of federal taxes in Canada as a running example.
The natural language description is presented in~\cref{fig:bl-cra-example-nl} and a reference implementation in~\cref{fig:bl-cra-example-dsl}, both describing how income tax is computed from personal income and Canadian Social Insurance Number (SIN).
We wrote the reference implementation so that it handles the base rules and an exception, to enable a faithful representation of what our users see.
As documented in~\cite{sin-cra}, SINs that start with a 9 indicate that an individual is a temporary foreign worker whose taxes will be determined by different rules.
The implementation captures the base rules in~\cref{fig:bl-cra-example-dsl} from~\cref{line:cra-rules-start}--\cref{line:cra-rules-stop}, and the exception in~\cref{line:cra-sin}.

In this example, \sirna produces the detailed error message shown in~\cref{fig:bl-message}.
Note that the variable names are derived from the user provided variable specifications and the variables in the \dsl.
This error message identifies a potential inconsistency between $\nl$ and $\dsl$, with an assignment to variables in which the two artifacts disagree.
In this case, it is possible for the output to differ when \texttt{Income} from \nl~ is not equivalent to \texttt{income} in \dsl~(i.e., underspecified).
To resolve this case, \sirna prompts the user for feedback about the expected equalities and ranges for input variables.
Our specification language over conjunctions of predicates is used to facilitate this interaction (see~\cref{fig:bl-cra-bounds}).

Running \sirna with the additional information generates a different output: when the SIN in the implementation starts with a 9, \texttt{final_tax} can be zero.
This successfully identifies the exception that was present in $\dsl$ but not communicated in $\nl$.
To rectify this, the user has three choices: (1) either update the $\nl$ to describe the SIN exception (2) update the $\dsl$ to remove this exception (if it is truly erroneous), or (3) mark the discrepancy as intentional (suppressing future findings about this error from \sirna). 

To handle the case where users intend that there be certain discrepancies between $\nl$ and $\dsl$, \sirna provides a mechanism to enumerate all behavioral differences between the input pair.
This is achieved by providing the user with a language for specifying conjunctions of predicates over variables to restrict inputs to (\nl, \dsl).
In this way, \sirna can operate at the same level of abstraction as the user without requiring knowledge of the implementation details.
In our running example, the user provides feedback to \sirna identifying that SINs starting with $9$ are intentional exceptions, so that it will no longer report findings with SINs in this range.
Running \sirna one last time, it reports that the input pair has no inconsistencies.

\section{\sirna}
\label{sec:bl-pipeline}

As outlined in the overview, this section describes the evaluation phase of \sirna where users evaluate the generated SMT specifications for (\nl, \dsl).
\sirna has three interacting components: formalizing the natural language documentation, encoding the implementation logic, and handling the consistency checks.
We present the architecture diagram in~\cref{fig:bl-arch}.
The yellow boxes represent contributions by either the user or the LLM.
Blue boxes denote SMT files.
The input files correspond to the pair (\nl, \dsl).
We begin with a formal presentation of consistency checks in~\cref{subsec:bl-cc}.
Then, we describe how $\nl$ and $\dsl$ are formalized in~\cref{subsec:bl-nltosmt} and~\cref{subsec:bl-dsltosmt} respectively.

\begin{figure*}[t]
\centering
\includegraphics[width=0.8\textwidth]{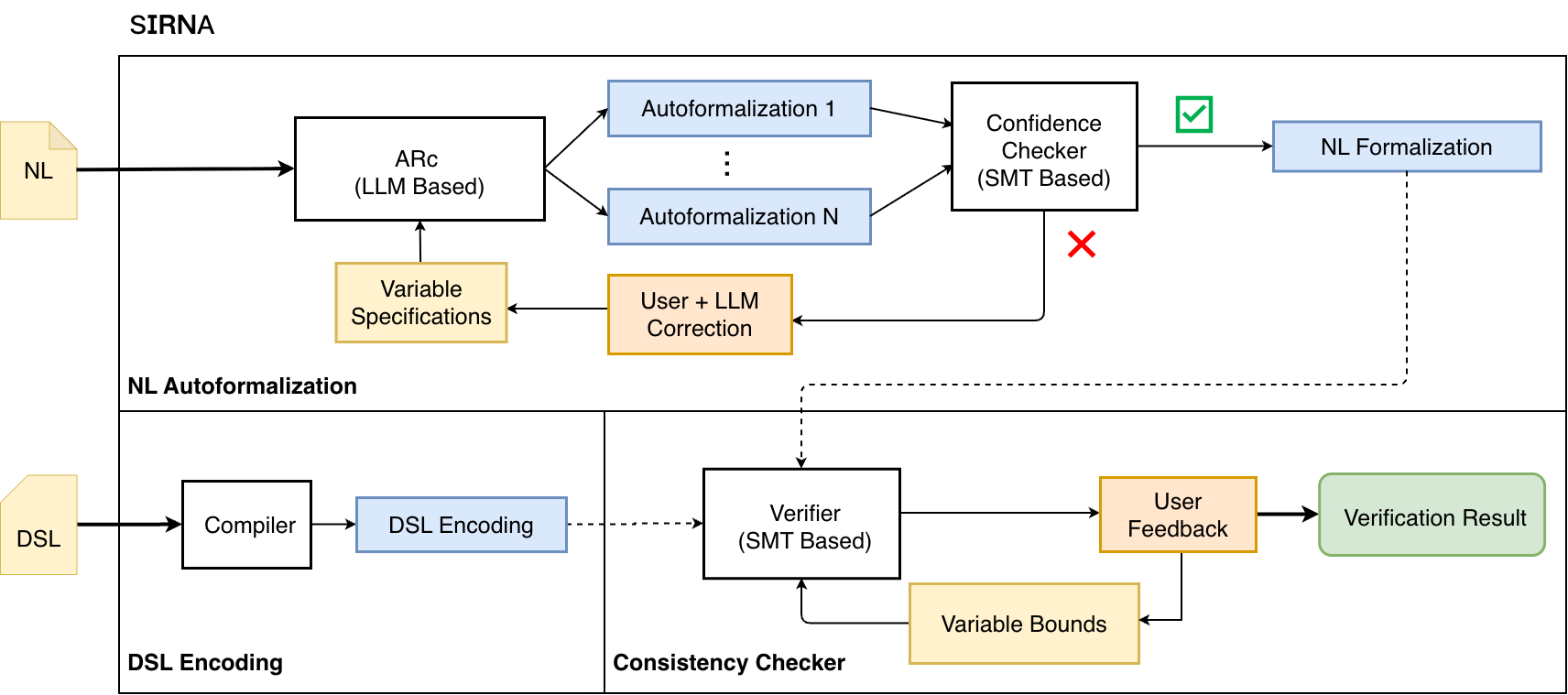}
\caption{Sirna Architecture Diagram.}
\label{fig:bl-arch}
\end{figure*}

\subsection{Consistency Checker}
\label{subsec:bl-cc}

The Consistency Checker pipeline verifies that a formalized natural language (NL) documentation and its DSL implementation are semantically aligned.
As shown in \cref{fig:bl-arch}, the checker operates on two SMT encodings: one produced by the NL Autoformalization pipeline, and one obtained via a deterministic translation of the DSL implementation into SMT.
The \emph{Verifier} determines whether these two specifications define the same behavior under a shared set of assumptions.
If consistency holds, then the implementation  conforms to the formalization we extracted from the NL.
Otherwise, the Verifier produces a concrete counterexample that helps diagnose errors in either the NL documentation or the DSL implementation.

\sirna is applicable when both NL documentation and DSL implementation define deterministic numeric computations that map relevant inputs to a real-valued output.
We model this behavior using cost functions $f : \mathbb{R}^k \to \mathbb{R}$, where $k$ is a fixed, sufficiently large dimension that subsumes all relevant inputs (e.g., income from our running example).
Unused inputs may be ignored by the function.
To determine equivalence between cost functions, pointwise equivalence is a natural starting point.
Given two cost functions $f$ and $g$, we write $f \equiv g \iff \forall \vx \in \mathbb{R}^{k},\; f(\vx) = g(\vx).$

As described in~\cref{sec:bl-overview}, documentation and implementation for real-world business logic often differ in both parameter structure and intended input domain.
To capture these differences, we introduce two orthogonal generalizations of direct equivalence: parameter mappings (relating variables in the NL formalization to the DSL encoding), and constraints, which restrict analysis to intended practical domains.
A parameter mapping is a total function $\sigma : \mathbb{R}^k \to \mathbb{R}^k$ that rearranges, duplicates, or discards inputs as needed.
We write $f(\sigma(\vx))$ to denote the cost function obtained by applying $\sigma$ to the inputs of $f$.
In practice, cost computations are intended to operate only on a restricted domain of inputs (e.g., non-negative income or known exceptions).
We model such assumptions using \emph{constraints}, represented as a predicate $\rho : \mathbb{R}^k \to \mathbb{B}$ that characterizes the intended input domain.
We now combine these two notions.
Given a parameter mapping $\sigma : \mathbb{R}^k \to \mathbb{R}^k$ and constraints $\rho : \mathbb{R}^k \to \mathbb{B}$, to define when two cost functions $f$ and $g$ are \emph{equivalent under mapping and constraints}, we write
\begin{equation}
f \equiv_{\sigma}^\rho g \iff \forall \vx \in \mathbb{R}^k,\; \rho(\vx) \Rightarrow f(\sigma(\vx)) = g(\vx).
\end{equation}
These notions provide a semantic foundation for reasoning about documentation and business logic that differ in interface structure and operational assumptions.
We now formalize consistency of cost functions with respect to their outputs.
We assume that cost functions are represented as First-Order Logic (FOL) formulas over the theory of linear real arithmetic (LRA).
We use SMT-LIB~\cite{SMTLIB} as the concrete representation format.

\paragraph*{\textbf{Verifier}}
We fix a standard model-theoretic semantics for SMT formulas over linear real arithmetic.
For any SMT formula $A$ with free variables $\vec{v}$, we write $\sem{A} \;\subseteq\; \mathbb{R}^{|\vec{v}|}$ to denote the set of all real-valued assignments to $\vec{v}$ that satisfy $A$.
In particular, if $A(\vx,y)$ has input variables $\vx$ and output variable
$y$, then
\[
\sem{A} \;=\; \{\,(\vx,y) \in \mathbb{R}^{|\vx|} \times \mathbb{R} \mid A(\vx,y) \text{ holds}\,\}.
\]
We say that $A$ is \emph{functional} if for every assignment to $\vx$ there exists exactly one real value $y$ such that $A(\vx, y)$ holds.
In this case, $A$ denotes a function $P : \mathbb{R}^{|\vx|} \rightarrow \mathbb{R}$, defined by $P(\vx) = y$ iff $P(\vx, y)$ holds.
Equivalently, $\sem{A}$ is the graph of $P$.
Using the running example as an illustration, the SMT formula in~\cref{fig:bl-cra-smt-nl} encodes a piecewise-linear tax computation, where \texttt{Income} is the input and \texttt{FinalTaxes} is the output.

\paragraph{\text{Consistency Checking.}}
Let $A$ and $B$ be functional SMT formulas over variable set $V = \nlang{V} \cup \dslang{V}$, with output variables $\nlang{O} \in \nlang{V}$ and $\dslang{O} \in \dslang{V}$.
Let $\sem{A} : \mathbb{R}^{|\vx|} \to \mathbb{R}$ be the cost function mapping the inputs $\vx$ in $\nlang{V}$ to the output $\nlang{O}$, and similarly for $\sem{B}$ mapping to $\dslang{O}$.
We say that $A$ and $B$ are equivalent over $V$ when $\vdash (A \leftrightarrow B) \iff \sem{A} \equiv \sem{B}.$
Let $\Phi$ be a parameter mapping between variables in $\nlang{V}$ and $\dslang{V}$, and let $\Psi$ denote constraints restricting variable domains.
Viewed as SMT objects, $\Phi$ and $\Psi$ relate variables in $V$ such that in any satisfying assignment, inputs $\vx$ and $\vx'=\sigma(\vx)$ are such that $\rho(\vx)$ holds.
To define when $A$ and $B$ are \emph{equivalent under mapping and constraints}, we write
\begin{equation}
\label{eq:fe}
\forall \; V \cdot (A \wedge B \wedge \Phi \wedge \Psi \implies \nlang{O} = \dslang{O}).
\end{equation}
If \cref{eq:fe} is valid, the cost functions agree under $\Phi$ and $\Psi$.
If \cref{eq:fe} is not valid, there exists an assignment to the variables in $V$ that satisfies $A$, $B$, $\Phi$, and $\Psi$, but yields different output values for $\nlang{O}$ and $\dslang{O}$.
Such an assignment constitutes a \emph{counterexample} to consistency: a concrete input instance that witnesses semantic disagreement between the cost functions.
\sirna produces error messages from such counterexamples that users can inspect to diagnose misalignment in the NL, implementation, or the mapping and constraint assumptions. 

\begin{theorem}
\label{thm:sc}
Let $A$ and $B$ be functional SMT formulas over variable sets $\nlang{V}$ and $\dslang{V}$, with designated output variables $\nlang{O} \in \nlang{V}$ and $\dslang{O} \in \dslang{V}$.
Let $\Phi$ be a parameter mapping between $\nlang{V}$ and $\dslang{V}$, and $\Psi$ be constraints over these variables.
Let $f = \sem{A}$ and $g = \sem{B}$ denote the real-valued cost functions corresponding to $A$ and $B$, and let $\sigma$ and $\rho$ denote the mappings and constraints for the inputs to $f$ and $g$.
Then
\[
f \equiv_{\sigma}^{\rho} g
\iff
\vdash \forall \; V \;.\; 
(A \wedge B \wedge \Phi \wedge \Psi \implies \nlang{O} = \dslang{O}).
\]
\end{theorem}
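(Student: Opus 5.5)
We prove both directions by unfolding the semantics. Everything rests on two facts. First, functionality: for each input assignment there is exactly one output value satisfying $A$ (resp.\ $B$), namely $f$ (resp.\ $g$) applied to that input. Second, the stated correspondence between $\Phi,\Psi$ and $\sigma,\rho$: in any assignment satisfying $\Phi \wedge \Psi$, the DSL inputs $\vx$ and the NL inputs $\vx'$ satisfy $\vx' = \sigma(\vx)$ and $\rho(\vx)$. We also need the converse: for every $\vx$ with $\rho(\vx)$, setting the NL inputs to $\sigma(\vx)$ yields an assignment of the input variables that satisfies $\Phi \wedge \Psi$. This is the intended reading of ``$\Phi$, $\Psi$ denote $\sigma$, $\rho$,'' and we will state it explicitly as the interpretation of the hypothesis.

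For ($\Rightarrow$), we assume $f \equiv_\sigma^\rho g$ and fix an arbitrary assignment to $V$ satisfying $A \wedge B \wedge \Phi \wedge \Psi$. Let $\vx$ be its DSL inputs. By the correspondence, the NL inputs equal $\sigma(\vx)$ and $\rho(\vx)$ holds. Since the assignment satisfies $A$ and $A$ is functional, $\nlang{O} = f(\sigma(\vx))$. Likewise, $B$ gives $\dslang{O} = g(\vx)$. The hypothesis then gives $\nlang{O} = \dslang{O}$. Since the assignment was arbitrary, the universally closed implication is true in the standard model of LRA, and hence valid. We use that validity here means truth in that fixed model, which matches the paper's model-theoretic semantics.

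For ($\Leftarrow$), we argue by contraposition. Suppose some $\vx$ has $\rho(\vx)$ but $f(\sigma(\vx)) \neq g(\vx)$. We build an assignment as follows: the DSL inputs are $\vx$, the NL inputs are $\sigma(\vx)$, $\nlang{O} = f(\sigma(\vx))$, and $\dslang{O} = g(\vx)$. Functionality guarantees that these outputs satisfy $A$ and $B$, and the converse correspondence guarantees that $\Phi \wedge \Psi$ hold. Then $A \wedge B \wedge \Phi \wedge \Psi$ is satisfied while $\nlang{O} \neq \dslang{O}$, so the formula is not valid.

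The main obstacle is justifying the converse correspondence, and also ensuring that any non-input, auxiliary variables of $A$ and $B$ can be extended consistently. We handle the latter by taking $\sem{A}$ to be the projection onto inputs and output after existentially quantifying auxiliaries, which functionality already presupposes. We handle the former by making explicit that $\Phi$ constrains only input variables and that $\Phi \wedge \Psi$ defines exactly the graph of $\sigma$ restricted to $\rho$.
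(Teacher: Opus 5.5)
Your proposal is correct and follows the paper's proof essentially step for step. The ($\Rightarrow$) direction is the same unfolding of functionality together with the forward correspondence between $\Phi \wedge \Psi$ and $\sigma,\rho$, and your contrapositive ($\Leftarrow$) builds exactly the assignment the paper builds directly from an arbitrary $\vx$ with $\rho(\vx)$. The converse correspondence and the existential treatment of auxiliary variables that you state explicitly are the same hypotheses the paper uses silently under ``by construction of the SMT encodings $\Phi$ and $\Psi$,'' so your version is just more explicit about them.
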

\begin{proof}
We prove both directions of the equivalence.
First, assume that $f$ and $g$ satisfy $f \equiv_\sigma^\rho g$. 
Let $s : V \to \mathbb{R}$ be any assignment satisfying the SMT formulas $A$, $B$, $\Phi$, and $\Psi$. 
By the definitions of $\sem{A}$ and $\sem{B}$, the output variables evaluate to $\nlang{O} = f(\vx')$ and $\dslang{O} = g(\vx)$ for some inputs $\vx$ and $\vx' = \sigma(\vx)$ with $\rho(\vx)$ holding. 
Since $f \equiv_\sigma^\rho g$, we have $f(\vx') = g(\vx)$, which implies $\nlang{O} = \dslang{O}$ under $s$. 
Because $s$ was an arbitrary satisfying assignment, the implication in~\cref{eq:fe} holds for all assignments, and thus~\cref{eq:fe} is valid.

Now assume that~\cref{eq:fe} is valid. 
Consider any input vector $\vx$ such that $\rho(\vx)$ holds, and let $\vx' = \sigma(\vx)$. 
By construction of the SMT encodings $\Phi$ and $\Psi$, there exists an assignment $s : V \to \mathbb{R}$ satisfying $A$, $B$, $\Phi$, and $\Psi$ that corresponds to these inputs, with outputs $\nlang{O} = f(\vx')$ and $\dslang{O} = g(\vx)$. 
Validity of~\cref{eq:fe} then guarantees $\nlang{O} = \dslang{O}$ for this assignment, which implies $f(\vx') = g(\vx)$. 
Since $\vx$ was arbitrary subject to $\rho(\vx)$, we conclude that $f \equiv_\sigma^\rho g$.
\end{proof}

\subsection{NL Autoformalization}
\label{subsec:bl-nltosmt}

Automatically formalizing natural language into logic (\emph{autoformalization}) is inherently ambiguous~\cite{olausson2023linc, arc}.
\sirna breaks the autoformalization into two stages: (1) defining a set of logical variables and (2) producing a formalization over the variables.
Concretely, \sirna requires as input:
(i) a NL description of a cost function (\nl),
(ii) a \emph{variable specification} consisting of a set of SMT variables $\schema$ paired with natural language descriptions of their intended meanings $\schemad$, and
(iii) a confidence threshold $\theta \in [0,1]$.
As output, \sirna provides either a representative formalization or an explicit failure indicating why a formalization could not be produced.
\cref{fig:bl-cra-smt-nl} shows a successful formalization.
Following the NL Autoformalization pipeline presented in~\cref{fig:bl-arch}, we first describe how \sirna's interface is implemented then discuss how users define variable specifications.

We re-use the redundant translation capability that \emph{Automated Reasoning checks in Amazon Bedrock Guardrails} (ARc)~\cite{arc} provides.
Concretely, ARc generates multiple candidate formalizations of the same $\nl$ which are checked for logical equivalence with the confidence checker using $\theta$, the confidence threshold.
Intuitively, $\theta$ enables \sirna users to have confidence in the formalization of their cost functions that will be used for consistecy checks.
Formally, let $T = \{t_1,\dots,t_n\}$ be the set of SMT translations generated from \nl.
We define a binary relation $\equiv$ on $T$ such that $t_i \equiv t_j$ iff $t_i$ and $t_j$ are logically equivalent.
This induces a partition of $T$ into equivalence classes $\mathcal{C} = \{C_1, \dots, C_k\}$ such that  
\begin{gather} 
T = \bigcup \mathcal{C} = 
\bigcup_{i=1}^k C_i ,\quad
\forall r \in \{1,\dots,k\},\ \forall t_i, t_j \in C_r,\ t_i \equiv t_j, 
\\
\forall r,s \in \{1,\dots,k\},\ r \neq s \Rightarrow C_r \cap C_s = \emptyset,
\end{gather}
and each $C_r$ is maximal with respect to logical equivalence.
Let $C_{\max} \in \mathcal{C}$ be an equivalence class of maximum cardinality.
The confidence score is defined as $\conf = \frac{|C_{\max}|}{|T|}$.
An autoformalization is accepted iff $\conf \ge \theta$; otherwise, the process fails and reports diagnostic information.
In this way, $\theta$ sets the minimum level of semantic agreement required to resolve competing autoformalizations.
Importantly, $\theta=1$ (i.e., 100\% agreement) does not guarantee that the autoformalization will align with user intent since systematic misinterpretations of $\nl$ may lead to unanimous but unexpected autoformalizations.
Despite these limitations, we show in~\cref{sec:bl-evaluation} that with $\theta > 0.5$ (e.g. majority voting)\footnote{Although ARc supports $\theta$ < 0.5, \sirna only considers the case where $\theta > 0.5$, which ensures that a single candidate translation can be selected by majority voting.}, \sirna reliably produces the expected formalization on our benchmarks and improved analysis precision and recall.

\begin{figure}[t!]
\prettylstbl
\begin{lstlisting}[breaklines=true,numbers=none]
(declare-const Bracket_1 Real)
(declare-const Bracket_2 Real)
(declare-const Bracket_3 Real)
(declare-const Bracket_4 Real)
(declare-const Bracket_5 Real)
(declare-const FinalTaxes Real)
(declare-const Income Real)

(assert (= Bracket_1 (* 0.145 (ite (<= Income 57375) Income 57375))))
(assert (= Bracket_2 (* 0.205 (ite (<= Income 57375) 0 (ite (<= Income 114750) (- Income 57375) (- 114750 57375))))))
(assert (= Bracket_3 (* 0.26 (ite (<= Income 114750) 0 (ite (<= Income 177882) (- Income 114750) (- 177882 114750))))))
(assert (= Bracket_4 (* 0.29 (ite (<= Income 177882) 0 (ite (<= Income 253414) (- Income 177882) (- 253414 177882))))))
(assert (= Bracket_5 (* 0.33 (ite (<= Income 253414) 0 (- Income 253414)))))
(assert (= FinalTaxes (+ Bracket_1 Bracket_2 Bracket_3 Bracket_4 Bracket_5)))
\end{lstlisting}
\caption{CRA Running Example: Autoformalization of $\nl$ in SMT.}
\label{fig:bl-cra-smt-nl}
\end{figure}

\paragraph*{\textbf{Defining variable specifications}}
In a typical use-case, users of \sirna will evaluate many instances of (\nl, \dsl) for the same domain (e.g., taxes).
We require users to provide the variable specifications as input to \sirna, which can be a slow and a manual process. 
To mitigate this, we developed a prompt that generates  variable specifications for $\nl$ cost descriptions given a single variable specification in their domain.
We show the prompt in~\cref{fig:bl-schemaprompt}.
This allows users of \sirna to analyze dozens or hundreds of $(\nl, \dsl)$ pairs, therefore amortizing the manual effort in practice.
We show a concrete example for the generated variable specification of our running example in~\cref{fig:bl-cra-schema}.

\subsection{DSL Encoding}
\label{subsec:bl-dsltosmt}

\newsavebox{\figdslgrammar}
\begin{lrbox}{\figdslgrammar}
\begin{minipage}{\textwidth}
\begin{grammar}
<prog> ::= (<assign>)+ 

<assign> ::= `let' <identifier> `=' <expr> | `let' <identifier> `=' `<input>'

<expr> ::=  <multexpr> ((+ | -) <multexpr>)*

<multexpr> ::= <unaryexpr> (($\times$ | /) <unaryexpr>)*

<unaryexpr> ::= <number> | <identifier> | `(' <expr> `)' | <if>

<if> ::= `if' <expr> <comp> <expr> `then' <expr> `else' <expr>

<number> ::= [0-9]+(.[0-9]+)?

<identifier> ::= [a-zA-Z][_a-zA-Z0-9]*

<comp> ::= \string> | \string< | $\equiv$ | $\leq$ | $\geq$
\end{grammar}
\end{minipage}
\end{lrbox}%
\begin{figure}[t]
\centering
\scalebox{0.8}{\usebox{\figdslgrammar}}
\caption{Business Language (BL) Grammar}
\label{fig:bl-dsl-grammar}
\end{figure}

The final component in \cref{fig:bl-arch} is the pipeline that encodes the DSL in SMT.
In this section, we present the motivation for our Business Language (BL), its grammar and an algorithm that encodes BL programs to SMT. 
The BL grammar is loop free, recursion free, and does not support memory allocation.

The design of BL is informed by cost calculations and business logic across financial services and telecommunications providers.
Common patterns in business logic include percentage-based charges (e.g., payment processing costs), tiered pricing (where rates change based on transaction volume or amount) and conditional rules (different rates for different customer categories).
Public examples include PayPal's checkout processing fees (3.49\% + \$0.49 for commercial transactions in the United States~\cite{paypal}), Stripe's volume-based pricing (declining percentages as volume increases), and AWS's tiered data transfer pricing.
We present the grammar for BL in~\cref{fig:bl-dsl-grammar}.
While commercial platforms often use proprietary languages for business logic, our grammar captures the essential computational patterns: arithmetic expressions for calculating costs or taxes, conditionals for implementing tiers and rules, and let-bindings for breaking down complex calculations.

To encode BL programs to SMT, we present the algorithm in~\cref{alg:dsltosmt}.
We show the generated SMT file for our running example program in~\cref{fig:bl-cra-smt-bl}.
Each let-binding is translated into an equality in the SMT formula.
Conditional statements (if-then-else) are directly mapped to \texttt{ite} in SMT.
Arithmetic and boolean operations are also mapped directly to their SMT counterparts.
Since the DSL is loop and recursion free, the translation always results in a finite, quantifier-free SMT formula.

\begin{figure}[t]
\prettylstprompt
\begin{lstlisting}[breaklines=true,numbers=none]
(declare-const p29 Real)
(declare-const special Real)
(declare-const p20_5 Real)
(declare-const p26 Real)
(declare-const p33 Real)
(declare-const final_tax Real)
(declare-const sin Real)
(declare-const income Real)
(declare-const p14_5 Real)
(assert (= p14_5 (* 0.145 (ite (> income 57375) 57375 income))))
(assert (= p20_5 (ite (> income 57375) (* 0.205 (- (ite (> income 114750) 114750 income) 57375)) 0)))
(assert (= p26 (ite (> income 114750) (* 0.26 (- (ite (> income 177882) 177882 income) 114750)) 0)))
(assert (= p29 (ite (> income 177882) (* 0.29 (- (ite (> income 253414) 253414 income) 177882)) 0)))
(assert (= p33 (ite (> income 253414) (* 0.33 (- income 253414)) 0)))
(assert (= special (ite (< sin 900000000) 1 0)))
(assert (= final_tax (* special (+ (+ (+ (+ p14_5 p20_5) p26) p29) p33))))
\end{lstlisting}
\caption{CRA Running Example: SMT encoding of BL program SMT.}
\label{fig:bl-cra-smt-bl}
\end{figure}

\begin{algorithm}[t!]
\caption{Encode DSL Program to SMT}
\label{alg:dsltosmt}
\begin{algorithmic}[1]
\Require Program $P$ in the DSL grammar
\Ensure SMT-LIB encoding of $P$
\Function{Encode}{$P$}
    \State \defs, \exprs~$\gets$ empty list
    \For{each assignment $a$ in $P$}
        \State (\defs$_a$, \exprs$_a) \gets$ \Call{EncodeAssign}{$a$}
        \State append \defs$_a$ (resp. \exprs$_a$) to \defs~(resp. \exprs)
    \EndFor
    \State \Return (\defs, \exprs)
\EndFunction

\Function{EncodeAssign}{$a$}
    \State  $a \gets$ ``let $x = e$''
    \If{$e$ is an input}
        \State \Return $(\texttt{declare-const } x \texttt{ Real}), \emptyset$
    \Else
        \State $expr\_smt \gets$ \Call{EncodeExpr}{$e$}
        \State \Return $(\texttt{declare-const } x$ \texttt{ Real}$), ($\texttt{assert (= }$x \; expr\_smt \texttt{))}$
    \EndIf
\EndFunction

\Function{EncodeExpr}{$e$}
    \If{$e$ is a number or identifier}
        \State \Return $e$
    \ElsIf{$e$ is a parenthesized expression}
        \State \Return \Call{EncodeExpr}{inner expression}
    \ElsIf{$e$ is a binary operation $e_1 \;\text{op}\; e_2$}
        \State \Return $(\text{op }$ \Call{EncodeExpr}{$e_1$} \Call{EncodeExpr}{$e_2$}$)$
    \ElsIf{$e$ is a conditional ``if $e_1$ comp $e_2$ then $e_3$ else $e_4$''}
        \State $cond \gets$ \Call{EncodeComp}{$e_1$, comp, $e_2$}
        \State \Return $(\text{ite } cond$ \Call{EncodeExpr}{$e_3$} \Call{EncodeExpr}{$e_4$}$)$
    \EndIf
\EndFunction

\Function{EncodeComp}{$e_1$, comp, $e_2$}
    \State \Return $(\text{comp}$ \Call{EncodeExpr}{$e_1$} \Call{EncodeExpr}{$e_2$}$)$
\EndFunction
\end{algorithmic}
\end{algorithm}

\section{Evaluation}
\label{sec:bl-evaluation}

In this section, we present the evaluation of \sirna where users evaluate the generated SMT specifications for (\nl, \dsl).
We implemented the three components of \sirna (see~\cref{sec:bl-pipeline}) in Python.
For NL Autoformalization (see~\cref{subsec:bl-nltosmt}), we used the Amazon Automated Reasoning Checks API~\cite{arc} with a 120-second timeout to compute SMT formulas from natural language, applying majority voting over translations from 3 LLMs (2 Sonnet 3.7 models and 1 Sonnet 3.5 model).
We used Z3 for consistency checking between SMT translations (see~\cref{subsec:bl-cc}).

We evaluate \sirna to answer the following research questions: 
(RQ1) Is \sirna effective at determining business logic consistency between $\nl$ and $\dsl$? 
(RQ2) Does \sirna provide  useful explanations when it detects inconsistency?
To answer the two research questions, we evaluate \sirna on a proprietary dataset of 369 real pairs (\nl, \dsl) from an Amazon internal system.
In this context, \dsl refers to the proprietary language used by Amazon teams.
Due to privacy concerns and regulations, we are unable to disclose details about the internal dataset.
To transparently demonstrate effectiveness, we also evaluate on an additional 313 public~\cite{public-benchmarks} pairs of (\nl, \dsl).
In this case, \nl comes from three domains: USPS shipping rules, IRS tax rules, and CRA tax rules.
\dsl refers to BL (described in~\cref{subsec:bl-dsltosmt}) that is LLM-generated from \nl.

\paragraph*{\textbf{Amazon Dataset}}

On the Amazon proprietary dataset of 369 (\nl, \dsl) pairs, we compared \sirna against an LLM-as-a-Judge~\cite{zheng2023judging} (LLMaJ) baseline using Claude Sonnet 3.7~\cite{anthropic2024claude} with chain-of-thought reasoning~\cite{wei2022chain}.
We show an example prompt for the LLMaJ setup in~\cref{fig:bl-llmajprompt}.

\begin{figure}[t!]
 \prettylstprompt
\begin{lstlisting}[breaklines=true,numbers=none]
Document Logic Consistency Verification Task
<role>
You are a formal verification expert tasked with checking the consistency 
between documentation describing a cost calculation expressed in natural 
language and its corresponding Domain Specific Language (DSL).
</role>
<instructions>
1. Determine whether the Natural Language Policy and the Domain Specific 
Language (DSL) are semantically equivalent and consistent with each other given:
a. Data Schema (Variable Definitions)
The following variables are available for the Documented calculation:
${DATA_KEYS}
b. Natural Language Documentation
This is the human-readable documentation that describes how values should be 
calculated:
${NL}
c. Code implementation of the logic in a DSL
{DSL Grammar Documentation}
DSL implementation to verify:
${Implementation}
2. Verification Requirements
You must check for:
a. Logical EquiQvalence: Do the natural language policy and DSL decision table express the same calculation logic?
b. Completeness: Does the DSL cover all scenarios mentioned in the natural language document?
3. Consistency: Are there any contradictions between the documented description and the DSL rules?
4. Variable Usage: Are variables used correctly according to their definitions in the data schema?
5. Mathematical Correctness: Do the formulas and calculations match between both representations?
${AdditionalDomainSpecificRules}   
3. Response Format
Analyze both representations carefully, then provide your answer in the 
following format:
<thinking> 
Explain your step-by-step analysis: 
- What the natural language document states 
- What the DSL encodes 
- Key points of comparison 
- Any inconsistencies or issues found 
</thinking>
<final_response> Yes </final_response>
OR
<final_response> No </final_response>
<\instructions>
<note>
Important: Answer Yes if the policy and DSL are semantically equivalent and consistent.
Answer No if there are any inconsistencies, contradictions, or logical differences.
Your final_response MUST be exactly "Yes" or "No" (case-sensitive)
<\note>
\end{lstlisting}
\caption[LLMaJ Example Prompt]{An LLMaJ Example Prompt. This prompt has been slightly edited to remove references to the proprietary DSL and to the specific use case within Amazon, but remains representative of the one used in our experiment.}
\label{fig:bl-llmajprompt}
\end{figure}

For RQ1, we ran both methods on each instance, producing a consistency label: "positive", "negative", "error".
Positive indicates detected inconsistency, negative indicates proved equivalence, and error indicates pipeline failure (e.g., a failure of autoformalization).
We compare predictions against ground truth labels from domain experts in Table~\ref{table:metrics}.
\sirna achieves 100\% precision and 94.3\% recall, outperforming LLMaJ on both metrics.
The difference is particularly pronounced in recall: \sirna detects 94.3\% of genuine inconsistencies (with only 3 formalization failures), identifying 23 additional inconsistencies that LLMaJ misses.
In contrast, LLMaJ misses over 40\% of inconsistencies.
This answers RQ1 positively: \sirna is more effective at determining consistency than LLMaJ.

For RQ2, domain experts manually reviewed all inconsistencies detected by \sirna. The review confirmed that all 50 reported inconsistencies are genuine. Based on \sirna's diagnostic feedback, experts determined that 13 cases were unintended inconsistencies; resolved by updating either the documentation or implementation. 
The remaining 37 cases were confirmed as intended inconsistencies (see ~\cref{sec:bl-overview} for discussion), helping clarify the intended semantics of the business logic. 
We count both intended and unintended inconsistencies as true positives since they represent real differences between \nl and \dsl confirmed by manual review. 
This answers RQ2 positively: \sirna produces actionable explanations that enable domain experts to diagnose root causes and take corrective action.
We can see the value of these formally explainable results when comparing to the LLMaJ approach.
While the LLMaJ is also successful at discovering most of the unintended inconsistencies (10 of 13), it provides no guarantees of the correctness of its findings (and in fact it hallucinated 5 false positives). 

\begin{table*}[t]
\centering
\begin{tabular}{lccccccccc}
\hline
Method & Precision & Recall & F1 & Accuracy & Err. & TN & FN & TP & FP \\
\hline
LLMaJ  & 85.7\% & 56.6\% & 68.2\% & 92.4\% & 0 & 313 & 21 & 30 & 5 \\
\sirna & 100\%  & 94.3\% & 97.1\% & 99.2\% & 3 & 316 & 0  & 50 & 0 \\
\hline
\end{tabular}
\caption[Comparison of \sirna and LLMaJ on the Amazon Dataset.]{Comparison of \sirna and LLMaJ on the Amazon dataset. Positive (P) denotes inconsistency detected; Negative (N) denotes consistency proved or assumed. Errors are counted as false negatives.}
\label{table:metrics}
\end{table*}

\paragraph*{\textbf{Public Datasets}}
While the Amazon dataset demonstrates that \sirna is effective in practice, it remains proprietary. 
To provide a transparent and reproducible evaluation, we introduce three public benchmarks of 311 instances spanning different domains: USPS shipping rules (211), IRS tax rules (88), and CRA tax rules (14).
Unlike the Amazon dataset, these benchmarks use LLM-generated (see prompt in~\cref{fig:bl-cra-prompt-bl}) $\dsl$ implementations paired with publicly available rule descriptions.

The CRA dataset comprises 14 tax bracket descriptions~\cite{cra-faq} augmented with a SIN-dependent exception as shown in the running example: all pairs are inconsistent without this exception, and consistent once acknowledged. 
The IRS dataset combines federal and state tax rules~\cite{irs-faq}, converted from CSV to NL using the script in~\cref{fig:bl-irscsv2nl}.
The USPS dataset consist of weight-based shipping rates from~\cite{usps}.
Unlike the CRA dataset, the pairs in IRS and USPS datasets are expected to be equivalent.

\begin{table}[t]
\centering
\begin{tabular}{lccc||cccc}
\toprule
Dataset & Con. & Incon. & Err. & Dataset & Con. & Incon. & Err. \\
\midrule
CRA         & 0  & 14 & 0  & IRS  & 76  & 1 & 11 \\
CRA (SIN)   & 14 & 0  & 0  & USPS & 211 & 0 & 0  \\
\bottomrule
\end{tabular}
\caption{\sirna results on public benchmarks with 66\% confidence threshold.}
\label{tab:bench-stats}
\vspace{-0.6cm}
\end{table}

To answer RQ1, we ran \sirna on all 313 public benchmarks. Table~\ref{tab:bench-stats} summarizes the results.
On CRA, \sirna correctly identifies all 14 benchmarks as inconsistent prior to modeling the SIN exception, and proves all 14 consistent once the exception is accounted for.
On IRS, \sirna proves 76 benchmarks consistent, reports 1 inconsistency, and rejects 11 at the autoformalization stage. 
On USPS, \sirna proves all 211 benchmarks consistent. 
These results support a positive answer to RQ1: when autoformalization succeeds, \sirna reliably determines semantic consistency.

To answer RQ2, we analyzed the artifacts produced by \sirna for rejected or inconsistent cases. 
Across the IRS and CRA benchmarks, failures fall into three dominant, user-actionable categories: 
(1) omitted output variables (e.g., \texttt{FinalTaxes}), 
(2) unconstrained bracket variables (allowing negative or undefined values), and 
(3) unintended operations introduced during translation (e.g., spurious \texttt{min}/\texttt{max} calls).
The single IRS inconsistency in~\cref{tab:bench-stats} is a false positive caused by a unanimous but incorrect autoformalization, illustrating a known limitation of agreement-based validation. 
A full breakdown of autoformalization failures for IRS are shown in~\cref{tab:irs-failures}.
We also show all the autoformalization results for CRA in~\cref{tab:cra-formalizations}.
These diagnostics enable users to refine variable specifications or constraints, supporting a positive answer to RQ2: \sirna provides meaningful, actionable explanations rather than opaque failures.

\begin{table}[t]
\centering
\begin{tabular}{lccc}
\toprule
Benchmark &  & Autoformalization Results & \\
\midrule
mo/single & t1 (MB) & t2 (uses min) & t3 (correct) \\
il/joint & t1 (correct) & t2 (GF) & t3 (MF) \\
ga/single & t1 (correct) & t2 (GF) & t3 (MF) \\
ms/single & t1 (GF) & t2 (correct) & t3 (MF) \\
ny/single & t1 (MF) & t2 (MC) & t3 (MC) \\
sc/joint & t1 (MD) & t2 (MD) & t3 (MF) \\
hi/single & t1 (GF) & t2 (MC) & t3 (correct) \\
us/joint & t1 (MF) & t2 (MC) & t3 (MD) \\
us/single & t1 (MD) & t2 (MD) & t3 (MF) \\
ca/single & t1 (correct) & t2 (MF) & t3 (MC) \\
\textbf{ms/joint} & t1 (MF) & t2 (MF) & t3 (MF) \\
\textbf{nj/single} & t1 (MF) & t2 (MC) & t3 (MC) \\
\bottomrule
\end{tabular}
\caption[Autoformalization errors for IRS benchmarks.]{IRS benchmarks with failed translations by the autoformalization pipeline presented in~\cref{subsec:bl-nltosmt}.
Each row represents a specific benchmark and the three translations (t1, t2, t3) that were generated.
We annotate, in brackets beside each translation the autoformalization result:
(correct) means the autoformalization is correct;
(MF) means the autoformalization is missing the FinalTaxes variable;
(MB) means there are missing intermediate variables for brackets (Bracket\_i's);
(MC) means there are missing constraints on the intermediate brackets (Bracket\_i's);
(MD) means the autoformalization uses functions with missing definitions;
(GF) means general failures due to spurious constraints on input and output variables.
Benchmarks in bold refer those that satisfy the confidence threshold $(\theta = 66\%)$ but are ultimately incorrect formalizations.}
\label{tab:irs-failures}
\end{table}

\begin{table}[t]
\centering
\begin{tabular}{lccc}
\toprule
Benchmark &  & Autoformalization Results & \\
\midrule
ab & t1 (MC) & t2 (correct) & t3 (correct) \\
bc & t1 (MC) & t2 (correct) & t3 (correct) \\
federal & t1 (MF) & t2 (correct) & t3 (correct) \\
mb & t1 (MF) & t2 (correct) & t3 (correct) \\
nb & t1 (MF) & t2 (correct) & t3 (correct) \\
nl & t1 (correct) & t2 (correct) & t3 (MC) \\
ns & t1 (correct) & t2 (correct) & t3 (correct) \\
nu & t1 (MF) & t2 (correct) & t3 (correct) \\
nwt & t1 (MF) & t2 (correct) & t3 (correct) \\
on & t1 (MF) & t2 (correct) & t3 (correct) \\
pei & t1 (correct) & t2 (correct) & t3 (correct) \\
qc & t1 (MF) & t2 (correct) & t3 (correct) \\
sk & t1 (correct) & t2 (correct) & t3 (correct) \\
yt & t1 (MC) & t2 (correct) & t3 (correct)  \\
\bottomrule
\end{tabular}
\caption[Autoformalization results for CRA benchmarks.]{CRA benchmarks with their corresponding autoformalization results using the autoformalization pipeline presented in~\cref{subsec:bl-nltosmt}.
Each row represents a specific benchmark and the three translations (t1, t2, t3) that were generated.
Autoformalization results are annotated as follows:
(correct) means the autoformalization is correct;
(MF) means the autoformalization is missing the FinalTaxes variable;
(MC) means there are missing constraints on the intermediate brackets (Bracket\_i's).}
\label{tab:cra-formalizations}
\end{table}

\paragraph*{Majority Voting Analysis}
\sirna computes confidence $\theta$ for the autoformalized SMT translation and uses majority voting ($\theta > 0.5$) to filter out unreliable translations (see~\cref{subsec:bl-nltosmt}). Compared to using a single LLM for translation, majority voting improves precision and accuracy. Precision measures the percentage of correctly predicted instances among those \sirna labels, while accuracy measures the percentage among all instances. On the IRS dataset, majority voting improved precision by 5.2\%, from 71/76 to 75/76, and accuracy by 4.5\%, from 71/88 to 75/88. On other datasets, majority voting does not negatively impact precision or accuracy.~\cref{tab:confusion} shows the complete confusion matrices comparing single-LLM translation versus majority voting across all public datasets.

\begin{table*}[t!]
\centering
\begin{tabular}{ll|cc|cc}
\toprule
\multirow{2}{*}{\textbf{Dataset}} & \multirow{2}{*}{\textbf{Method}} & \multicolumn{2}{c|}{\textbf{output}} & \multicolumn{2}{c}{\textbf{Metrics}} \\
& & Non-Equiv & Equiv & Precision & Accuracy \\
\midrule
\multirow{4}{*}{IRS} 
& \multirow{2}{*}{Single LLM} 
& TP: 0 & FN: 0 & 93.4\% & 80.6\% \\
& & FP: 5 & TN: 71 & & \\
\cmidrule{2-6}
& \multirow{2}{*}{Majority Voting} 
& TP: 0 & FN: 0 & 98.6\% & 85.2\% \\
& & FP: 1 & TN: 75 & & \\
\midrule
\multirow{4}{*}{CRA} 
& \multirow{2}{*}{Single LLM} 
& TP: 14 & FN: 0 & 100\% & 100\% \\
& & FP: 0 & TN: 0 & & \\
\cmidrule{2-6}
& \multirow{2}{*}{Majority Voting} 
& TP: 14 & FN: 0 & 100\% & 100\% \\
& & FP: 0 & TN: 0 & & \\
\midrule
\multirow{4}{*}{CRA (sin)} 
& \multirow{2}{*}{Single LLM} 
& TP: 0 & FN: 0 & 100\% & 100\% \\
& & FP: 0 & TN: 14 & & \\
\cmidrule{2-6}
& \multirow{2}{*}{Majority Voting} 
& TP: 0 & FN: 0 & 100\% & 100\% \\
& & FP: 0 & TN: 14 & & \\
\midrule
\multirow{4}{*}{USPS} 
& \multirow{2}{*}{Single LLM} 
& TP: 0  & FN:0 & 100\% & 100\% \\
& & FP: 0 & TN: 211 & & \\
\cmidrule{2-6}
& \multirow{2}{*}{Majority Voting} 
& TP: 0 & FN: 0 & 100\% & 100\% \\
& & FP: 0 & TN: 211 & & \\
\bottomrule
\end{tabular}
\caption[Confusion matrix comparison: Single LLM vs. Majority Voting.]{Confusion matrix comparison: Single LLM vs. Majority Voting on public datasets. TP: true positive (correctly predicted inconsistency), TN: true negative (correctly predicted consistency), FP: false positive (incorrectly predicted inconsistency), FN: false negative (incorrectly predicted consistency). Precision = (TP+TN)/(TP+TN+FP+FN), Accuracy = (TP+TN)/(TP+TN+FP+FN+unlabeled).}
\label{tab:confusion}
\end{table*}

\begin{figure}[t!]
\centering
\begin{subfigure}[b]{0.46\textwidth}
\prettylstbl
\begin{lstlisting}[]
[
 {
   "type": "outputs",
   "variables": ["final_tax", "FinalTaxes"]
 },
 {
   "type": "equals",
   "variables": ["Income", "income"]
 },
 {
  "type": "ranges",
   "variables": ["Income"],
   "min": 0
 }
]
\end{lstlisting}
\caption{Sensibility Bounds.}
\label{fig:bl-cra-bounds-1}
\end{subfigure}
\begin{subfigure}[b]{0.46\textwidth}
\prettylstbl
\begin{lstlisting}[]
[
 {
   "type": "outputs",
   "variables": ["final_tax", "FinalTaxes"]
 },
 {
   "type": "equals",
   "variables": ["Income", "income"]
 },
 {
  "type": "ranges",
   "variables": ["Income"],
   "min": 0
 },
 {
  "type": "ranges",
  "variables": ["sin"],
  "max": 900000000
 }
]
\end{lstlisting}
\caption{Sensibility + Exception Bounds.}
\label{fig:bl-cra-bounds-2}
\end{subfigure}
\caption{CRA Running Example: Bounds.}
\label{fig:bl-cra-bounds}
\end{figure}

\section{Conclusion}
\label{sec:bl-conclusion}
We have presented \sirna, a tool for verifying the consistency of business logic representations using SMT solvers.  
\sirna enables users to continuously check that documentation and implementation remain aligned, proactively identify inconsistencies before they impact customers, and scaling efficiently as cost structures or business rules grow in complexity.  
Through our evaluation, we demonstrated that \sirna effectively integrates LLM-guided autoformalization with formal reasoning, producing concrete and diagnosable traces of inconsistencies.  
By automating the verification of (\nl, \dsl) pairs, \sirna reduces reliance on manual audits by providing a systematic and semi-automated approach to maintaining correctness in evolving business logic.

%
% ---- Bibliography ----
%
% BibTeX users should specify bibliography style 'splncs04'.
% References will then be sorted and formatted in the correct style.
%
\bibliographystyle{IEEETran}
\bibliography{bibliography}

\newpage

\appendices

\twocolumn[\section{Prompts and Scripts\vspace{1.3\baselineskip}}\label{sec:appendixa}]

\begin{figure}[t!]
\prettylstprompt
\begin{lstlisting}[breaklines=true,numbers=none]
<role>
You are a helpful logician who extracts logical variables from natural language according to a given schema.
</role>
<instructions>
Write a json file that captures variable definitions and their descriptions using the following schema:
<schema>
{
  "variables": [
    ...
    {
      "name": "Bracket_i",
      "type": "NUMBER",
      "description": "Taxes for Bracket i. Always included in the sum; 0 if income does not reach this bracket."
    },
    ...
  ]
}
</schema>
</instructions>
<example>
<input>
<description>
Alberta income tax rates for 2025
Tax rate	Taxable income threshold
8%	on the portion of taxable income that is $60,000 or less, plus
...
</description>
</input>
<output>
{
  "variables": [
    ... 
    {
      "name": "Bracket_1",
      "type": "NUMBER",
      "description": "Taxes for Bracket 1: 8% on the portion of income up to $60,000. Always included in the sum; 0 if income does not reach this bracket."
    }, 
    ...
  ]
}
</output>
</example>
Now you generate the output for the following input. Please ensure to also have
<output> tags in your final response.
<input>
<description>
${description}
</description>
</input>
\end{lstlisting}
\caption{Variables Description Generation Example Prompt}
\label{fig:bl-schemaprompt}
\end{figure}

\begin{figure}[!t]
\prettylstprompt
\begin{lstlisting}[breaklines=true,numbers=none]
{
  "types": [],
  "variables": [
    {
      "name": "Income",
      "type": "NUMBER",
      "description": "Total taxable income for Canadian federal 2025 tax calculation."
    },
    {
      "name": "Bracket_1",
      "type": "NUMBER",
      "description": "Taxes for Bracket 1: 14.5% on the portion of income up to $57,375. Always included in the sum; 0 if income does not reach this bracket."
    },
    {
      "name": "Bracket_2",
      "type": "NUMBER",
      "description": "Taxes for Bracket 2: 20.5% on the portion of income over $57,375 up to $114,750. Always included in the sum; 0 if income does not reach this bracket."
    },
    {
      "name": "Bracket_3",
      "type": "NUMBER",
      "description": "Taxes for Bracket 3: 26% on the portion of income over $114,750 up to $177,882. Always included in the sum; 0 if income does not reach this bracket."
    },
    {
      "name": "Bracket_4",
      "type": "NUMBER",
      "description": "Taxes for Bracket 4: 29% on the portion of income over $177,882 up to $253,414. Always included in the sum; 0 if income does not reach this bracket."
    },
    {
      "name": "Bracket_5",
      "type": "NUMBER",
      "description": "Taxes for Bracket 5: 33% on the portion of income over $253,414. Always included in the sum; 0 if income does not reach this bracket."
    },
    {
      "name": "FinalTaxes",
      "type": "NUMBER",
      "description": "Sum of Bracket_1 through Bracket_5 in numerical order. Includes zeros from brackets that do not apply. Calculation must always add all five brackets explicitly."
    }
  ],
  "rules": [
    {
      "id": "DFV12ES58TNM",
      "expression": "(>= FinalTaxes 0)",
      "translation": "FinalTaxes is greater or equal to 0"
    }]
}
\end{lstlisting}

\caption{CRA Running Example: Variable Specification.}
\label{fig:bl-cra-schema}
\end{figure}

\begin{figure}[t!]
\prettylstpy
\begin{lstlisting}[breaklines=true,numbers=left]
import csv
import os

def parse_tax_csv(csv_file):
    """Parse the CSV and group rows by state, collecting single/joint brackets."""
    with open(csv_file, newline='', encoding='utf-8-sig') as f:
        reader = csv.reader(f); rows = list(reader)
    rows = [r for r in rows if any(cell.strip() for cell in r)]; states = {}; current_state = None
    for row in rows:
        if not row or not any(row):
            continue
        state = row[0].strip(); single_rate = row[1].strip() if len(row) > 1 else ""
        single_bracket = row[3].strip() if len(row) > 3 else ""
        joint_rate = row[4].strip() if len(row) > 4 else ""
        joint_bracket = row[6].strip() if len(row) > 6 else ""
        if state:
            current_state = state; states[current_state] = {"single": [], "joint": []}
        if single_rate.lower() == "none" and joint_rate.lower() == "none":
            states.pop(current_state, None); current_state = None
            continue
        if not current_state:
            continue
        if single_rate and single_rate.lower() != "none":
            states[current_state]["single"].append((single_rate, single_bracket))
        if joint_rate and joint_rate.lower() != "none":
            states[current_state]["joint"].append((joint_rate, joint_bracket))
    return states

def write_tax_files(states, base_dir="output"):
    """Generate directory and file structure with formatted text."""
    os.makedirs(base_dir, exist_ok=True)
    for state, brackets in states.items():
        state_dir = os.path.join(base_dir, state.lower()); os.makedirs(state_dir, exist_ok=True)
        for filing_status, entries in brackets.items():
            if not entries:
                continue
            sub_dir = os.path.join(state_dir, filing_status); os.makedirs(sub_dir, exist_ok=True)
            file_path = os.path.join(sub_dir, f"{state.lower()}.txt")
            def clean(val):
                return val.replace("$", "").replace(",", "").strip()
            with open(file_path, "w", encoding="utf-8") as f:
                f.write("Tax rate,Taxable income threshold\n")
                for i, (rate, threshold) in enumerate(entries):
                    threshold_clean = clean(threshold or "$0")
                    if i == 0 and len(entries) > 1:
                        next_thresh = clean(entries[i+1][1]) if len(entries) > 1 else threshold_clean
                        line = f"{rate}, on the portion of taxable income that is ${next_thresh} or less, plus\n"
                    elif i < len(entries) - 1:
                        next_thresh = clean(entries[i+1][1])
                        line = f"{rate}, on the portion of taxable income over ${threshold_clean} up to ${next_thresh} plus\n"
                    else:
                        line = f"{rate}, on the portion of taxable income over ${threshold_clean}\n"
                    f.write(line)
\end{lstlisting}

\caption{IRS NL Formatter Script.}
\label{fig:bl-irscsv2nl}
\end{figure}

\begin{figure}[t!]
\prettylstprompt
\begin{lstlisting}[breaklines=true,numbers=none]
<role>
You are a tax expert who writes tax programs using BL
</role>
<instructions>
1. Write a program to compute how much taxes are due in the output variable
(fee_bl) using the Business Language BL. The Antlr grammar for BL is provided
below:
<grammar>
prog: assigns ;
assigns : (assign)+ ;
...
</grammar>
</instructions>
<example>
<input>
<description>
Federal income tax rates for 2025
Tax rate,Taxable income threshold
14.5%,on the portion of taxable income that is $57,375 or less, plus
20.5%, on the portion of taxable income over $57,375 up to $114,750, plus
26%, on the portion of taxable income over $114,750 up to $177,882, plus
29%, on the portion of taxable income over $177,882 up to $253,414, plus
33%, on the portion of taxable income over $253,414
</description>
</input>
<output>
<program>
let income = <input>
let sin = <input>

let p14_5 = 0.145 * (if income > 57345 then 57345 else income)
let p20_5 = if income > 57345 then 0.205 * ((if income > 114750 then 114750 else income) - 57345) else 0
let p26 = if income > 114750 then 0.26 * ((if income > 177882 then 177882 else income) - 114750) else 0
let p29 = if income > 177882 then 0.29 * ((if income > 253414 then 253414 else income) - 177882) else 0
let p33 = if income > 253414 then 0.33 * (income - 253414) else 0

let special = if sin == 901334909 then 1 else 0
let fee_bl = special * (p14_5 + p20_5 + p26 + p29 + p33)
</program>
</output>
<note>
Input variables: income (taxable income) and sin (Social Insurance Number)
Tax brackets: Progressive tax calculation across 5 income brackets
Special condition: SIN starting with 9, i.e., >= 900000000 pay $0
Final output: Stored in the fee_bl
</note>
</example>
Now you generate the output for the following input. Please ensure to also have
<output> tags in your final response.
<input>
<description>
${description}
</description>
</input>
\end{lstlisting}
\caption{CRA Benchmarks: Generate BL Programs.}
\label{fig:bl-cra-prompt-bl}
\end{figure}

\end{document}